\documentclass[%
 reprint,
 onecolumn,
 amsmath,
 amssymb,
 aps,
 unsortedaddress
]{revtex4-2}

\usepackage{graphicx}
\usepackage{subcaption}
\usepackage{amsmath,amssymb,amsthm,easybmat,verbatim}
\usepackage[margin=1in]{geometry}

\usepackage{caption}
\usepackage{xcolor}
\usepackage{listings}
\usepackage{hyperref}
\usepackage{float}
\hypersetup{
    colorlinks,
    linkcolor=blue,
    citecolor=blue,
    urlcolor=blue
}

 \newcommand{\ket}[1]{{\left| #1 \right\rangle}}
\usepackage{orcidlink}

\usepackage{colortbl}
\usepackage{tikz}
\usetikzlibrary{positioning,fit,backgrounds, calc}

\newtheorem{theorem}{Theorem}[section]
\newtheorem{proposition}[theorem]{Proposition}
\newtheorem{lemma}[theorem]{Lemma}
\newtheorem{corollary}[theorem]{Corollary}
\theoremstyle{definition}
\newtheorem{definition}[theorem]{Definition}

\newcommand{\halfedge}[3]{%
  \coordinate (he@m) at ($(#1)!0.5!(#2)$);
  \pgfmathanglebetweenpoints{\pgfpointanchor{#1}{center}}{\pgfpointanchor{#2}{center}}%
  \edef\he@ang{\pgfmathresult}%
  \begin{scope}
    \path[clip,overlay,shift={(he@m)},rotate=\he@ang] (-30,-30) rectangle (0,30);
    \draw[red,line width=0.9pt] (#1) to[bend left=#3] (#2);
  \end{scope}
  \begin{scope}
    \path[clip,overlay,shift={(he@m)},rotate=\he@ang] (0,-30) rectangle (30,30);
    \draw[blue,line width=0.9pt] (#1) to[bend left=#3] (#2);
  \end{scope}%
}

\newcommand{\halfedgehi}[3]{%
  \coordinate (he@m) at ($(#1)!0.5!(#2)$);
  \pgfmathanglebetweenpoints{\pgfpointanchor{#1}{center}}{\pgfpointanchor{#2}{center}}%
  \edef\he@ang{\pgfmathresult}%
  \begin{scope}
    \path[clip,overlay,shift={(he@m)},rotate=\he@ang] (-30,-30) rectangle (0,30);
    \draw[red,line width=2.6pt,opacity=0.95] (#1) to[bend left=#3] (#2);
  \end{scope}
  \begin{scope}
    \path[clip,overlay,shift={(he@m)},rotate=\he@ang] (0,-30) rectangle (30,30);
    \draw[blue,line width=2.6pt,opacity=0.95] (#1) to[bend left=#3] (#2);
  \end{scope}%
}

\tikzset{
  vtx/.style={circle,draw=black,fill=white,line width=0.8pt,
              minimum size=7mm,inner sep=0pt,font=\small},
  apex/.style={vtx,fill=black!8},
  spoke/.style={blue,line width=0.9pt},
  spokehi/.style={blue,line width=2.6pt},
}

\newcommand{\placevertices}[3]{%
  \foreach \i in {1,...,#1}{%
    \pgfmathsetmacro{\ang}{90+(\i-1)*360/#1}
    \coordinate (v\i) at (\ang:#2);
  }
  \coordinate (va) at (0,0);
}
\newcommand{\drawvertices}[2]{
  \foreach \i in {1,...,#1}{\node[vtx] at (v\i) {\i};}
  \node[apex] at (va) {#2};
}

\newcommand{\PM}{\mathcal{PM}}

\newcommand{\sig}{\operatorname{sig}}

\begin{document}

\preprint{APS/123-QED}

\title{Generating Dicke State Graphs}

\author{Rebekah Herrman\orcidlink{0000-0001-6944-4206}}

\email{rherrma2@utk.edu}
\affiliation{Department of Industrial and Systems Engineering, University of Tennessee Knoxville, USA}
\date{\today}
\begin{abstract}
Graph theory is a powerful tool in quantum computing, with applications ranging from quantum circuit synthesis and optimization to entanglement mapping. Recent work has shown how one can use edge-colored graphs to model photonic experiments that generate GHZ and W states. However, the latter work also proved that verifying that a graph models a Dicke state experiment is coNP-complete. In this work, we provide families of graphs that generate $\ket{D_{k}^a} \otimes \ket{0}^{\otimes b }$, where $b = |a-2k|$ is the number of spectator modes. The
graph setup consists of a doubled complete subgraph on $a$ vertices and a collection of auxiliary vertices. We prove that every coincidence carries exactly $k$ excitations, every weight-$k$ computational basis state on bitstrings of length $a$ is realized, and each of those bitstrings is realized exactly $(n/2)!$ times, where $n = a+b$. Since verifying the Dicke FORALL condition is coNP-complete in general, constructing explicit families that provably generate Dicke states is of interest. 
\end{abstract}

\maketitle

\section{Introduction}\label{sec:intro}
Dicke states have broad applications in quantum algorithm design and quantum information science. For example, one may want to solve combinatorial optimization problems that require selecting exactly $k$ elements in a space of $n$ items using the Quantum Approximate Optimization Algorithm (QAOA) or its variants \cite{farhi2014quantum}. This would require generating Dicke states as initial input for the algorithm or using Dicke state preparation routines to create custom mixers, as in Grover-Mixer QAOA \cite{bartschi2020grover} or the Quantum Alternating Operator Ansatz \cite{hadfield2019quantum}. In a similar vein, bitstrings of Hamming weight $k$ can be used to generate constraints in the Variational Quantum Eigensolver \cite{wang2025variational, marti2025spin}. Other applications that use Dicke states include quantum metrology, error correction, and communication \cite{saleem2024achieving, aydin2026quantum, prevedel2009experimental, illiano2023quantum, zou2018beating, aktar2023scalable}. There exist several efficient quantum circuits that can generate Dicke and near-Dicke states, and recent experimental work has benchmarked Dicke state preparation on gate-based hardware \cite{bartschi2022short, aktar2022divide, lemr2009conditional, wang2021preparing, stojanovic2023dicke, nepomechie2023qudit, vittal2025efficient}. 

However, generating Dicke states in other models of quantum computing is more challenging. Recent work has shown that one can represent photonic quantum experiments using edge-colored weighted graphs, $G= (V,E)$ \cite{KrennGuZeilinger2017, GuGraphsII2019, GuGraphsIII2019, KrennPathIdentity2017}. In this model, a vertex of a graph represents an optical path, edges represent crystals, and edge colors represent the mode emitted into each arm. This work showed that perfect matchings of the experiment graph correspond to an $n$-fold coincidence, where each matching contributes a basis state determined by the coloring of the edges in the matching. Graphs that generate high dimensional GHZ states have been extensively studied in this context \cite{ChandranGajjalaIllickan2024, ChandranGajjala2024}. Vardi and Zhang formalize perfect matching under vertex-color constraints (PMVC) in \cite{VardiZhang2022, VardiZhang2023}. The EXISTS-PVMC condition asks if there exists a perfect matching under vertex-color
constraints in a graph with bi-colored edges, while FORALL-PVMC asks if, for a bi-colored graph $G$ and a set of $C \subseteq \{1, \ldots, d\}^d$ allowed vertex colorings, if for each $c \in C$, does $G$ have at least one perfect matching with inhered vertex coloring $c$.
 Dicke The EXISTS condition is polynomial-time checkable, whereas Gajjala, Ray and Thilikos proved that Dicke FORALL-PMVC is coNP-complete \cite{gajjala2026w}, resolving an open problem of \cite{VardiZhang2023}. 

In this work, we prove that there exists a family of graphs that can generate $\ket{D_k^a}$ or $\ket{D_k^a} \otimes \ket{0}^{\otimes b}$ where $b$ is a collection of spectator modes. The approach relies on complete graphs on $a$ vertices, denoted $K_a$, that have double edges, all of weight one, between all pairs of vertices. We prove that every coincidence carries exactly $k$ excitations, every weight-$k$ computational basis state on the signal modes is realized, and each of the $\binom{a}{k}$ bitstrings of Hamming weight $k$ is realized $(n/2)!$ times where $n = a+b$. In particular, Theorem~\ref{thm:main} establishes the FORALL condition of Dicke FORALL-PMVC for a family of graphs and Theorem~\ref{thm:count} shows that for any two inherited colorings $c, c'$, both $c$ and $c'$ are generated by the same number of matchings, implying each basis state has equal amplitude.

This work is organized as follows. In Section~\ref{sec:background}, we introduce graph theory and photonic quantum experiment terms. Then, in Section~\ref{sec:gadget}, we define the doubled complete core gadget that will be used to generate arbitrary states of the form $\ket{D_k^a} \otimes \ket{0}^{\otimes b}$ in Section~\ref{sec:algorithm}. Finally, we conclude with a discussion in Section~\ref{sec:discussion}. 

\section{Background}\label{sec:background}
In this section, we define useful graph theory and quantum information terminology.

\subsection{Graph theory}
In this work, we consider multigraphs that do not contain self-loops. A graph $G = (V(G),E(G))$ is a collection of points called \textit{vertices} and lines that connect them, called \textit{edges}. An equivalent definition is that an edge contains endpoints $i$ and $j$, for $i \neq j$, $i, j \in V(G)$. For a simple graph, there can exist at most one edge between any pair of vertices, however in a multigraph, there can be more than one edge between a given pair of vertices. If there exists more than one edge between vertices $i$ and $j$, we say that these edges are \textit{parallel}. The key gadget in this work consists of a complete graph on $a$ vertices, denoted $K_a$, with two copies of each edge.

An \textit{edge coloring} of $G$ is a function $f: E \rightarrow C$ where $C$ is a set of colors. In this work, we define $C= \{ \texttt{red}, \texttt{blue} \}$. A \textit{half-edge coloring} is similarly a function $c$ that assigns colors to each half of every edge in $G$. For example, if $G$ contains the edge $ij$, one can assign $\texttt{red}$ or $\texttt{blue}$ to the half edge incident to $i$ and also to the half edge incident to $j$, as in Fig.~\ref{fig:k6-full}. Note that this may result in monochromatic edges if the colors assigned to each half edge are the same. 

A \textit{matching} $M \subseteq E$ of a graph is a collection of edges where no two edges are incident to a common vertex. A matching is \textit{perfect} if it covers every $v \in V$. We denote the set of perfect matchings of a graph as $\mathcal{PM}(G)$. As $G$ will be an experiment graph, defined below, we omit the $(G)$ when it is obvious. Fig.~\ref{fig:n6-matching} is an example of a perfect matching. Notably, only graphs with an even number of vertices can have perfect matchings. 

Although not an explicit graph theory definition, here we define the \textit{double factorial} of a natural number $n$, $n!!$, as the product of all numbers between $1$ and $n$, inclusive, with the same parity as $n$ modulo 2. If $n$ is even, $n!! = 2 \times 4 \times ... \times n$, and if $n$ is odd, $n!! = 1 \times 3 \times ... \times n$. We will require the double factorial when counting the number of times an edge-colored graph can generate a particular bitstring. 

\subsection{Connecting graphs and photonic quantum experiments}
The focus of this work is \textit{Dicke states}. A Dicke state is defined as
\begin{equation*}
\ket{D_k^N} = \binom{N}{k}^{-1/2} \sum_{\substack{x \in \{0,1\}^N \\ H(x) = k}} \ket{x},  
\end{equation*}
where $H(x)$ is the \textit{Hamming weight} of bitstring $x=x_1x_2 \ldots x_{N}$, defined as $|\{x_i: x_i=1\}|$. We now translate the creation of Dicke states in photonic systems to a graph theory problem.
The experiment graph is based on the conventions in \cite{KrennGuZeilinger2017}. In this convention, a vertex $v \in V$ is an optical path and an edge $e\in E$ is a spontaneous parametric down conversion (SPDC) crystal. The mode emitted into arm $u$ of crystal $e$ is the half-edge color of edge $e$ (red emits $\ket{1}$ and blue emits $\ket{0}$), and the $n$-fold coincidence is a perfect matching of $G$. The basis state registered by a perfect matching $M$ is a bitstring $x= x_1x_2\ldots x_{n} \in \{0,1\}^n$ where $x_i = 0$ if $c(h_i(e_i(M)))=\texttt{blue}$ and  $x_i = 1$ if $c(h_i(e_i(M)))= \texttt{red}$. 
 We now define an \textit{experiment graph}.
\begin{definition}[Experiment graph]\label{def:eg}
An \emph{experiment graph} is a loopless multigraph $G=(V,E)$ together with a
half-edge coloring $c$ assigning to each arm of each crystal the emitted mode
$\mathrm{R}$ (excited, $\ket 1$) or $\mathrm{B}$ (ground, $\ket 0$). For a
crystal $e$ let $R(e)\subseteq e$ be the set of arms emitting $\mathrm{R}$, and
$\rho(e)=|R(e)|\in\{0,1,2\}$ its \emph{excitation yield}. The setup is
\emph{single-excitation} if $\rho(e)\le 1$ for every crystal, and
\emph{excitation-balanced} if $\rho(e)=1$ for every crystal.
\end{definition}
 Physically, $\rho(e)=1$ indicates that the crystal emits the pair $\ket{1,0}$ or
$\ket{0,1}$, while $\rho(e)=0$ indicates the crystal emits the pair $\ket{0,0}$. In this work, we do not use $\rho(e)=2$ (a $\ket{1,1}$ crystal) as in \cite{gajjala2026w}. In Section~\ref{sec:gadget}, we will define a base experiment graph, and use it to generate graphs that implement Dicke states. We now define the registered basis states of an experiment graph.

\begin{definition}[Registered basis state]\label{def:sig}
Let $M\in\mathcal{PM}(G)$ be a coincidence. Each path $i$ receives exactly one photon
from the unique crystal $e_i(M)\in M$ feeding it. Define
$\sig(M)\in\{0,1\}^{n}$ such that $\sig(M)_i=1$ if and only if photon $i$ is excited. Writing
bitstrings as subsets, $\sig(M)=\bigsqcup_{e\in M}R(e)$. With all crystal
weights equal to $1$, the post-selected state is
\[
  \ket{\psi_G}\;\propto\;\sum_{M\in\PM(G)}\ket{\sig(M)}
  \;=\;\sum_{x\in\{0,1\}^n}\mu(x)\,\ket{x},
  \qquad \mu(x)=\bigl|\sig^{-1}(x)\bigr|.
\]
\end{definition}
Throughout this work, we want to count the Hamming weight of $\sig(M)$ to prove that all basis states of Hamming weight $k$ appear the same number of times when we enumerate over the registered basis states of all perfect matchings.
\begin{lemma}[Excitation accounting]\label{lem:additive}
For every coincidence $M$, the number of excited detectors is
$H(\sig(M))=\sum_{e\in M}\rho(e)$.
\end{lemma}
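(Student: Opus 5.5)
The plan is to read the identity straight off the disjoint-union description $\sig(M)=\bigsqcup_{e\in M}R(e)$ in Definition~\ref{def:sig}, and to spend the one real step justifying why that union is disjoint and why it captures every excited detector.

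First I will use that $M$ is a perfect matching. Every vertex (path) $i\in V$ lies in exactly one crystal $e_i(M)\in M$. Hence the crystals of $M$, viewed as $2$-element vertex subsets, partition $V$. Since $R(e)\subseteq e$ for each crystal, the sets $\{R(e)\}_{e\in M}$ are pairwise disjoint.

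Next I will check that the excited set of $\sig(M)$ is exactly $\bigcup_{e\in M}R(e)$. By definition $\sig(M)_i=1$ if and only if the half-edge of $e_i(M)$ at $i$ is colored $\mathrm{R}$, that is, if and only if $i\in R(e_i(M))$. If $i\in R(e)$ for some $e\in M$, then $i\in e$, and uniqueness of the crystal covering $i$ forces $e=e_i(M)$. So the two sets coincide.

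Finally, since the Hamming weight counts the excited coordinates,
\[
H(\sig(M))=\Bigl|\bigsqcup_{e\in M}R(e)\Bigr|=\sum_{e\in M}|R(e)|=\sum_{e\in M}\rho(e),
\]
using $\rho(e)=|R(e)|$. I expect no serious obstacle. The only point needing care is that parallel edges in the multigraph cause no double counting. This holds because a matching contains at most one of any set of parallel edges, as they share endpoints, so the partition argument in the first step is unaffected.
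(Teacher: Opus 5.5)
Your proof is correct and takes essentially the same approach as the paper. The crystals of a coincidence feed disjoint paths, so the sets $R(e)$ for $e\in M$ are pairwise disjoint, and their union is the excited set, which gives $H = \sum_{e\in M}|R(e)| = \sum_{e\in M}\rho(e)$; you just spell out in more detail what the paper compresses into one sentence, including the remark about parallel crystals.
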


\begin{proof}
Distinct crystals of a coincidence feed disjoint paths, so the sets $R(e)$, where
$e\in M$, are pairwise disjoint and their union is $\sig(M)$.
\end{proof}

The next result bounds the Hamming weight $H(\sig(M))$. 
\begin{proposition}[Mode-number parity and the excitation ceiling]\label{prop:ceiling}
Let $(G,c)$ be a single-excitation experiment graph with coloring $c$ admitting at least one coincidence.
Then
\begin{enumerate}
\item[(i)] the number of paths $n$ is even and
\item[(ii)] every coincidence registers at most $n/2$ excitations,
$H(\sig(M))\le n/2$, with equality for all $M$ if and only if every crystal used in some
coincidence is excitation-balanced.
\end{enumerate}
Consequently no such setup produces a state on an odd number of modes, and none
produces excitation number $k$ with fewer than $2k$ modes.
\end{proposition}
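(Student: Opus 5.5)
The plan is to read everything off the structure of a single coincidence and then apply Lemma~\ref{lem:additive}. Fix a coincidence $M\in\PM(G)$, which exists by hypothesis.

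For part (i), note that $G$ is loopless, so every crystal in $M$ covers exactly two distinct paths. Because $M$ is a perfect matching, these two-element sets partition $V$. Hence $n=2|M|$, so $n$ is even and $|M|=n/2$.

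For part (ii), apply Lemma~\ref{lem:additive} to get $H(\sig(M))=\sum_{e\in M}\rho(e)$. The setup is single-excitation, so $\rho(e)\le 1$ for each crystal. Summing over the $n/2$ crystals of $M$ gives $H(\sig(M))\le n/2$.

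The equality characterization needs a little care about quantifiers. The equality $H(\sig(M))=n/2$ holds exactly when every term in the sum attains its bound, i.e.\ $\rho(e)=1$ for every $e\in M$.
\begin{itemize}
\item If every crystal lying in some coincidence is excitation-balanced, then for each $M$ all its crystals have $\rho(e)=1$, so equality holds for all $M$.
\item Conversely, suppose equality holds for all $M$. Any crystal $e$ used in some coincidence belongs to some $M$. Equality for that $M$ forces $\rho(e)=1$, because a single term with $\rho(e)=0$ would make the sum strictly smaller.
\end{itemize}
This "termwise equality in a sum of bounded terms" step is the only point needing any care, and it is where the single-excitation hypothesis $\rho\le1$ is essential.

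For the consequences, part (i) directly rules out setups admitting a coincidence on an odd number of modes. If some coincidence registers $k$ excitations, part (ii) gives $k\le n/2$, i.e.\ $n\ge 2k$. Setups with no coincidence produce no state at all, so they are vacuously excluded. I expect no real obstacle. The main thing to state explicitly is that looplessness is what makes each crystal cover exactly two paths.
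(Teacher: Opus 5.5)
Your proposal is correct and follows the paper's argument: a coincidence partitions the paths into $n/2$ pairs, Lemma~\ref{lem:additive} with $\rho(e)\le 1$ gives the bound, and a single crystal with $\rho(e)=0$ in some coincidence breaks equality. You are more explicit than the paper about looplessness giving $n=2|M|$ and about the easy direction of the equivalence, which the paper leaves implicit.
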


\begin{proof}
(i) A coincidence pairs up all paths. (ii) A coincidence consists of $n/2$
crystals, each contributing at most one excitation by hypothesis. Applying
Lemma~\ref{lem:additive} results in $H(\sig(M))\le n/2$. Conversely, equality for all $M$ forces $\rho(e) =1$ on every crystal lying in some coincidence, as a single $\rho(e)=0$ for some $e \in M$ results in $H(\sig(M)) < n/2$.
\end{proof}

It is worth emphasizing that each crystal can inject at most one excitation into a coincidence, so
$k$ excitations require at least $k$ crystals, hence $n\ge 2k$ paths.

\section{Hamming weight gadgets}\label{sec:gadget}

Suppose we want to generate the Dicke state $\ket{D_d^{a}}$ where $a=2d$ for some $d \in \mathbb{N}$. Let us consider an $a$-vertex multigraph. Let the edges contain two copies of $ij$ for all $i \neq j, i,j \in [a]$. One copy of edge $ij$ will be half colored so that the red half-edge is incident to $i$ and blue half-edge is incident to $j$. The other copy of the edge $ij$ has the reverse coloring, so the blue half-edge is incident to $i$ and the red half-edge is incident to $j$, as in Fig.~\ref{fig:k6-full}.

\begin{figure}
  \centering
  \begin{tikzpicture}
    \placevertices{6}{3.4}{}

    \foreach \i in {1,...,5}{
      \foreach \j in {1,...,6}{
        \ifnum\i<\j
          \halfedge{v\i}{v\j}{16}   
          \halfedge{v\j}{v\i}{16}   
        \fi
      }
    }

    \foreach \i in {1,...,6}{\node[vtx] at (v\i) {\i};}
  \end{tikzpicture}
  \caption{The doubled complete graph on six vertices. Every pair
    $\{i,j\}$ carries two parallel edges: one red at $i$ and blue at $j$,
    the other red at $j$ and blue at $i$, for $30$ edges in total. Since
    each edge contributes exactly one red half and a perfect matching has
    three edges, every matching yields a bitstring of Hamming weight
    $k=3$.}
  \label{fig:k6-full}
\end{figure}

First, note that graphs of this form have $2^{d}((2d-1)!!)$ perfect matchings, as complete graphs on $a=2d$ vertices have $(2d-1)!!$ perfect matchings. The $2^{d}$ factor comes from the fact that each matching has $\frac{n}{2} = d$ edges, and for each pair of vertices, $i,j$, there are two choices of edge $ij$ that can be included in the matching. Clearly, each bitstring that results from this coloring has Hamming weight $d$, as every edge has one red half-edge. We claim that 1) this graph generates all bitstrings of Hamming weight $d$, and 2) when enumerating the set of bitstrings that correspond to every single perfect matching of this graph, each bitstring appears in the set the same number of times. However, we shall prove this in generality in Theorems~\ref{thm:main} and ~\ref{thm:count} below, so we do not prove it for this specialized case here. 

 Notably, since $n=a+b$, both $a$ and $b$ must be either even or odd in order for $n$ to be even, so $a \equiv b \pmod 2$. Thus, if we want to generate $\ket{D_k^{2d-1}}$, we require the use of an additional vertex, and instead generate $\ket{D_k^{2d-1}} \otimes \ket{0}$ by enforcing that all half-edges incident to vertex $2d$ are $\texttt{blue}$. To do this, we use a copy of $K_{2d-1}$ with two sets of edges as above. We then add one more vertex, $2d$, and connect it to every vertex between $1$ and $2d-1$ with a monochromatic blue edge, which we denote as the 
\textit{sink construction}. See Fig.~\ref{fig:n6-full} for the graph that generates $\ket{D_2^5} \otimes \ket{0}$. Notably this graph generates all bitstrings $x$ that have Hamming weight $\lfloor \frac{2d-1}{2}\rfloor$, where the last digit of the bitstring, $x_{2d}$, is always $0$, thus we can discard it. Again, we shall prove the general result in the next section, so we do not prove this specific case here. Similarly, if we connect all vertices in $K_5$ using a red half edge at $i$ for all $i \in V(K_5)$ and a blue half-edge at $6$, we can generate $\ket{D_3^5} \otimes \ket{0}$. We call this approach the \textit{source construction}. We now formally define these gadgets.

\begin{figure}
  \centering
  \begin{tikzpicture}[scale=1]
    \placevertices{5}{3.1}{6}

    \foreach \i in {1,...,5}{\draw[spoke] (v\i) -- (va);}

    \foreach \i in {1,...,4}{
      \foreach \j in {1,...,5}{
        \ifnum\i<\j
          \halfedge{v\i}{v\j}{16}   
          \halfedge{v\j}{v\i}{16}   
        \fi
      }
    }

    \drawvertices{5}{6}
  \end{tikzpicture}
  \caption{The graph that generates $\ket{D_2^5} \otimes \ket{0}$. It consists of a doubled $K_5$ on vertices $1,\dots,5$ in
    which every pair $\{i,j\}$ carries two parallel edges, each half red and half blue, such that one edge has red
    half at $i$ and one edge has red half at $j$, together with all-blue
    spokes joining each vertex to vertex $6$. This is an example of a \textit{sink construction}. If the half edges from $K_5$ to $6$ were all colored red at $i \in K_5$ and blue at $6$, it would generate $\ket{D_3^5} \otimes \ket{0}$ using a \textit{source construction}.}
  \label{fig:n6-full}
\end{figure}

\begin{figure}
  \centering
  \begin{tikzpicture}[scale=0.95]
    \placevertices{5}{3.1}{6}

    \foreach \i in {1,...,5}{\draw[spoke,black!15] (v\i) -- (va);}
    \foreach \i in {1,...,4}{
      \foreach \j in {1,...,5}{
        \ifnum\i<\j
          \begin{scope}[opacity=0.18]
            \halfedge{v\i}{v\j}{16}
            \halfedge{v\j}{v\i}{16}
          \end{scope}
        \fi
      }
    }

    \halfedgehi{v1}{v3}{16}       
    \halfedgehi{v2}{v4}{16}       
    \draw[spokehi] (v5) -- (va);  

    \drawvertices{5}{6}
  \end{tikzpicture}
  \caption{A perfect matching $\{13,24,56\}$ in the graph of
    Fig.~\ref{fig:n6-full}, with red halves at $1$ and $2$, yielding the
    bitstring $110000$. Each of the $\binom{5}{2}=10$ weight-two bitstrings
    arises from exactly $3!=6$ of the $60$ perfect matchings.}
  \label{fig:n6-matching}
\end{figure}

\begin{definition}[Doubled complete core gadget]\label{def:doubled}
Let $A$ be a set of $a$ signal paths. The \emph{doubled complete core gadget}
on $A$ consists of $a(a-1)$ crystals. For each pair of vertices of $A$ $\{i,j\}\subseteq A$, there are two crystals. One
emits $\ket{1_i,0_j}$ and one emits $\ket{0_i,1_j}$. 
\end{definition}
We now define $\texttt{G}(a, b, \sigma)$. 
\begin{definition}[$\texttt{G}(a,b,\sigma)$]\label{def:gadget}
Let $a\ge 1$, $b\ge 0$ with $b\le a$ and $a\equiv b\pmod 2$, and let
$\sigma\in\{\text{sink},\text{source}\}$ (with $\sigma=\text{sink}$ when $b=0$).
Take the doubled complete core gadget on $A$, add a set $B$ of $b$ \emph{spectator paths}
carrying \emph{no} crystals among themselves, and join each $v\in B$ to each
$i\in A$ by a single crystal $\sigma$, which is one of
\[
  \text{sink:}\ \ \ket{0_i,0_v},\qquad\qquad
  \text{source:}\ \ \ket{1_i,0_v}.
\]
Thus $n=a+b$ paths and $a(a-1)+ab$ crystals. Set
\[
  m=\frac{a-b}{2},\qquad
  k=\begin{cases} m=\dfrac{a-b}{2}, & \sigma=\text{sink},\\[2mm]
                  \dfrac{a+b}{2}, & \sigma=\text{source}.\end{cases}
\]
\end{definition}

A spectator path always
absorbs one signal path into a coincidence. In a sink construction that crystal injects
no excitation, so it costs one. In the source construction it injects one into the
signal path, so it pays one. Every spectator therefore moves $k$ up or down one unit. We now prove one of the key lemmas needed to prove that these gadgets can generate Dicke states.

\begin{lemma}[Canonical form of a coincidence]\label{lem:canon}
Every coincidence of $\texttt{G}(a,b,\sigma)$ arises uniquely as an injection
$\phi:B\hookrightarrow A$ (each spectator takes a distinct signal path), together
with a coincidence of the doubled complete core on the remaining $a-b=2m$ signal paths.
Conversely every such pair is a coincidence.
\end{lemma}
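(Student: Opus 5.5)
The plan is a direct structural decomposition of a perfect matching $M$ of $\texttt{G}(a,b,\sigma)$ according to which crystals cover the spectator paths, followed by the inverse construction and a check that the two maps are mutually inverse.

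\textbf{Forward direction.} Fix a coincidence $M\in\PM(\texttt{G}(a,b,\sigma))$ and let $v\in B$. By Definition~\ref{def:gadget}, $B$ carries no crystals among its own vertices, so every crystal incident to $v$ is a spoke $\{v,i\}$ with $i\in A$. Hence the unique crystal $e_v(M)$ feeding $v$ has the form $\{v,\phi(v)\}$ for some $\phi(v)\in A$. This defines a map $\phi:B\to A$. Since $M$ is a matching, two spectators cannot share a signal path, so $\phi$ is injective.

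Now let $A'=A\setminus\phi(B)$, which has $a-b=2m$ vertices. Every spoke $\{v,i\}$ has exactly one endpoint in $B$, and all such spokes in $M$ are already accounted for by $\phi$. Therefore the crystals of $M$ covering the vertices of $A'$ lie inside the doubled complete core, and they have both endpoints in $A'$. So $M'=M\setminus\{\{v,\phi(v)\}:v\in B\}$ is a perfect matching of the doubled core restricted to $A'$, which is again a doubled complete core gadget on $A'$.

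\textbf{Reverse direction.} Take any injection $\phi$ and any coincidence $M'$ of the doubled core on $A\setminus\phi(B)$. Add the spokes $\{v,\phi(v)\}$ to $M'$. Each such spoke exists because every $v\in B$ is joined to every $i\in A$ by exactly one crystal. The resulting edge set covers every vertex of $B$ once via $\phi$, every vertex of $\phi(B)$ once by injectivity, and every vertex of $A\setminus\phi(B)$ once via $M'$. It is therefore a coincidence.

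\textbf{Uniqueness.} The two constructions are mutually inverse. Given $M$, the spoke set and hence $\phi$ are determined, because there is only a single crystal between $v$ and $i$. The remainder $M'$ is then determined as well. Conversely, decomposing the assembled matching recovers $(\phi,M')$.

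\textbf{Main obstacle.} The only delicate point is the claim that no core crystal can cover a vertex in $\phi(B)$ and no spoke can appear outside the image of $\phi$. Both follow because each spectator is covered exactly once and spokes are the only crystals touching $B$. I would state this explicitly. I would also note the edge case $b=a$ (so $m=0$), where the core coincidence is the empty matching.
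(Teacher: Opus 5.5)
Your proposal is correct and follows the same argument as the paper. Spokes are the only crystals touching $B$, so each spectator is matched to a distinct signal path, giving the injection $\phi$; the remaining $2m$ signal paths must then be paired by core crystals among themselves; and the converse comes from reversing the construction. Your explicit uniqueness check (exactly one spoke crystal per pair $\{v,i\}$) and your treatment of the $b=a$ edge case spell out what the paper leaves implicit in ``reversing these arguments.''
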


\begin{proof}
Since $B$ carries no internal crystals (i.e. there are no edges between any two vertices of $B$) for any coincidence, every spectator is fed by a crystal reaching into $A$.  Since $M$ is a matching, distinct spectators use distinct signal paths,
giving an injection $\phi$. The signal paths outside $\phi(B)$ are fed only by
core crystals and must be paired among themselves, which is exactly a
coincidence of the doubled complete core gadget they span. Reversing these arguments proves the converse.
\end{proof}

\begin{theorem}\label{thm:weight}
Every coincidence of $\texttt{G}(a,b,\sigma)$ registers exactly $k$ excitations, all
of them on signal paths and every spectator path registers $\ket 0$ in every
coincidence. Equivalently, $\ket{\psi_{\texttt{G}}}$ is supported on the weight-$k$
sector of the signal modes tensored with vacuum on $B$.
\end{theorem}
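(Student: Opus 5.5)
We will prove Theorem~\ref{thm:weight} by combining the canonical decomposition of Lemma~\ref{lem:canon} with the additive excitation count of Lemma~\ref{lem:additive}. Fix an arbitrary coincidence $M$ of $\texttt{G}(a,b,\sigma)$. By Lemma~\ref{lem:canon}, $M$ splits uniquely as $M = M_B \sqcup M_A$. Here $M_B$ consists of the $b$ spoke crystals joining each $v\in B$ to $\phi(v)\in A$ for an injection $\phi$. The remaining part $M_A$ is a coincidence of the doubled complete core on the $2m=a-b$ signal paths in $A\setminus\phi(B)$, so $|M_A| = m$.

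Next we read off the excitation yield of each crystal type from Definitions~\ref{def:doubled} and~\ref{def:gadget}. Every core crystal emits $\ket{1_i,0_j}$ or $\ket{0_i,1_j}$, so $\rho(e)=1$ with the red arm in $A$. A sink spoke emits $\ket{0_i,0_v}$, so $\rho(e)=0$. A source spoke emits $\ket{1_i,0_v}$, so $\rho(e)=1$ with $R(e)=\{i\}\subseteq A$. Lemma~\ref{lem:additive} then gives
\[
H(\sig(M)) = \sum_{e\in M_A}\rho(e) + \sum_{e\in M_B}\rho(e) = m + \begin{cases} 0, & \sigma=\text{sink},\\ b, & \sigma=\text{source}.\end{cases}
\]
In the sink case this is $m=k$. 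In the source case it is $\tfrac{a-b}{2}+b=\tfrac{a+b}{2}=k$.

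For the localization claim, note that in every crystal type the arm at a spectator $v\in B$ emits $\mathrm{B}$, so $R(e)\cap B=\emptyset$ for all $e$. Since $\sig(M)=\bigsqcup_{e\in M}R(e)$, every spectator registers $\ket0$, and all $k$ excitations lie on signal paths. The equivalent formulation about $\ket{\psi_{\texttt{G}}}$ follows immediately from Definition~\ref{def:sig}. Every $x$ with $\mu(x)>0$ equals $\sig(M)$ for some $M$, so $x$ has weight $k$ on $A$ and is zero on $B$.

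No step is a genuine obstacle, since everything reduces to Lemma~\ref{lem:canon}. The only point needing care is the count $|M_A| = (a-b)/2$. It relies on the spectators consuming exactly $b$ distinct signal paths, which Lemma~\ref{lem:canon} guarantees, and on $a\equiv b \pmod 2$, so that $m$ is an integer. We will state this explicitly rather than leave it implicit.
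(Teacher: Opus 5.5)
Your proof is correct and follows essentially the same route as the paper. You decompose the coincidence via Lemma~\ref{lem:canon}, count $m$ excitations from the excitation-balanced core crystals plus $0$ or $b$ from the spoke crystals via Lemma~\ref{lem:additive}, and observe that every arm at a spectator emits $\mathrm{B}$; your remarks on $R(e)\cap B=\emptyset$ and on the support of $\mu$ just spell out steps the paper leaves implicit.
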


\begin{proof}
Decompose a coincidence as in Lemma~\ref{lem:canon}. The $m$ core crystals are
excitation-balanced and contribute $m$ excitations, all on signal paths. The
$b$ spectator crystals contribute $0$ each in a sink construction and $1$ each (on side $A$) in a source construction. Lemma~\ref{lem:additive} gives
weight $m$ or $m+b$, i.e.\ $(a-b)/2$ or $(a+b)/2$. Every crystal touching a
spectator emits $\ket 0$ into it, so $\sig(M)$ vanishes on $B$.
\end{proof} 
\noindent These results imply that $\texttt{G}(a,b,\sigma)$ satisfies Dicke \textsc{EXISTS-PMVC} by
construction.

\section{General Dicke State Graphs}\label{sec:algorithm}
In order to construct the state $\ket{D_k^a}$, we need to develop a graph that consists of at least $|A|=|V(K_a)|=a$ vertices. Let $B$ be the set of spectator modes and let $|B|=b$.  We claim that we can use the source and sink constructions above to generate  states of the form $\ket{D_{k}^a} \otimes \ket{0}^{\otimes b}$, where $b=|a-2k|$. Note that we can achieve bitstrings of any Hamming weight between $0$ and $a$ by varying the size of $B$ and using either the source or sink construction. We shall always require that $n=a+b$ is even, as only even graphs have perfect matchings. This implies that the cardinality of both $A$ and $B$ must be odd or must both be even, or $a \equiv b \pmod 2$.

In order to prove that these graphs generate $\ket{D_k^a }\otimes \ket{0}^{\otimes b}$, we need to prove that every bitstring of length $a$ and Hamming weight $k$ can be generated by a perfect matching of $\texttt{G}\;(a, b, \sigma)$ and that, in the set of perfect matchings of $\texttt{G}\;(a, b, \sigma)$, every bitstring of length $a$ and Hamming weight $k$ is generated the same number of times. Formally the former condition is

\begin{theorem}\label{thm:main}
    Fix $\texttt{G}\;(a, b, \sigma)$ and let $n=a+b$. For every $S \subseteq A$ such that $|S| = k$, the number of coincidences registering the excitation pattern $S$ is $\mu (S) = \large( \frac{n}{2} \large) !$. In particular, all $\binom{a}{k}$ patterns of weight $k$ on the first $a$ modes occur and 
    \begin{equation*}
        \ket{\psi_{\texttt{G}(a, b, \sigma)}} \propto\ket{D_k^a} \otimes \ket{0}^{\otimes b}.
    \end{equation*}
\end{theorem}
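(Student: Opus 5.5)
The plan is to fix a pattern $S\subseteq A$ with $|S|=k$ and count coincidences $M$ with $\sig(M)=S$ (on $A$; vacuum on $B$ is automatic by Theorem~\ref{thm:weight}) directly, using the canonical decomposition of Lemma~\ref{lem:canon}. Each such $M$ corresponds uniquely to an injection $\phi:B\hookrightarrow A$ together with a coincidence $M'$ of the doubled complete core on $A\setminus\phi(B)$, which has $2m$ vertices. So it suffices to count, for each admissible $\phi$, the core coincidences with the right red/blue pattern, and then sum over $\phi$.

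\textbf{Constraint on $\phi$.} In the sink construction the spectator crystal emits $\ket{0}$ into $\phi(v)$, so we need $\phi(B)\subseteq A\setminus S$. In the source construction it emits $\ket{1}$ into $\phi(v)$, so we need $\phi(B)\subseteq S$.

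\textbf{Counting the core.} On the remaining $2m$ signal paths, the excited set is $S'=S$ (sink) or $S'=S\setminus\phi(B)$ (source). A direct size check gives $|S'|=m$ in both cases:
\begin{itemize}
\item sink: $k=m$;
\item source: $k-b=(a-b)/2=m$.
\end{itemize}
The unexcited remainder $T'$ also has size $m$. Every core crystal is excitation-balanced, with exactly one red arm. Hence a core coincidence registers $S'$ if and only if each of its $m$ crystals joins a vertex of $S'$ to a vertex of $T'$, with its red arm at the $S'$ end. Given the pairing, the parallel copy is forced: exactly one of the two crystals on $\{i,j\}$ is red at $i$. So these core coincidences are in bijection with bijections $S'\to T'$, giving exactly $m!$ of them. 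This is the step I'd treat most carefully. It needs two checks: that no core crystal can lie inside $S'$ or inside $T'$, which follows since each core crystal contributes exactly one red arm and $|S'|=|T'|$; and that the parallel-edge choice is determined rather than doubling the count.

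\textbf{Summing over $\phi$.} In the sink case the injections $B\hookrightarrow A\setminus S$ number $\frac{(a-k)!}{(a-k-b)!}=\frac{(m+b)!}{m!}$, since $a-k=m+b$. In the source case the injections $B\hookrightarrow S$ number $\frac{k!}{(k-b)!}=\frac{k!}{m!}$. Multiplying by $m!$ gives
\begin{itemize}
\item sink: $\mu(S)=(m+b)!$, and $m+b=\frac{a+b}{2}=\frac n2$;
\item source: $\mu(S)=k!$, and $k=\frac{a+b}{2}=\frac n2$.
\end{itemize}
In both cases $\mu(S)=(n/2)!$. The hypotheses $b\le a$ and $a\equiv b \pmod 2$ guarantee that $m$ is a nonnegative integer, so all factorials make sense.

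\textbf{Conclusion.} Since $\mu(S)=(n/2)!>0$ for every weight-$k$ pattern $S$, all $\binom{a}{k}$ patterns occur. By Theorem~\ref{thm:weight} no other patterns occur and $B$ is always in vacuum. Substituting into Definition~\ref{def:sig} gives
\[
\ket{\psi_{\texttt{G}}}\propto (n/2)!\sum_{|S|=k}\ket{S}\otimes\ket{0}^{\otimes b}\propto\ket{D_k^a}\otimes\ket{0}^{\otimes b}.
\]
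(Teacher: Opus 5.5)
Your proposal is correct and follows essentially the same route as the paper's proof. You decompose each coincidence via Lemma~\ref{lem:canon}, restrict the injection $\phi$ to $A\setminus S$ (sink) or $S$ (source), and count the core part as the $m!$ bijections from the remaining excited signal paths to the remaining unexcited ones with the parallel copy forced; multiplying gives $\frac{(m+b)!}{m!}\cdot m!$ or $\frac{k!}{m!}\cdot m!$, both equal to $(n/2)!$, and the only cosmetic difference is that you parametrize both cases uniformly by $m$ where the paper writes the sink count as $\frac{(a-k)!}{(a-k-b)!}\cdot k!$ using $a-k-b=k$.
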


\begin{proof}
Define $T=A\setminus S$. We need prove the result when using both source and sink constructions.

\noindent \emph{Sink construction:} ($k=m$, $|T|=a-k=\tfrac{a+b}{2}=\tfrac n2$). By
Lemma~\ref{lem:canon} a coincidence registering $S$ consists of a choice of
which $b$ elements of $T$ are taken by spectators together with the injection
realizing it, and a pairing of $S$ with the remaining $k$ elements of $T$
(each such core crystal must be the copy emitting $\ket 1$ into the $S$-side,
which is forced). No element of $S$ may be taken by a spectator, since a sink
crystal would leave it unexcited. Hence
\[
  \mu(S)=\underbrace{\frac{(a-k)!}{(a-k-b)!}}_{\text{injections }B\hookrightarrow T}
         \cdot\underbrace{k!}_{\text{bijections }S\to T\setminus\phi(B)}
        =\frac{(a-k)!}{k!}\cdot k!=(a-k)!=\Bigl(\tfrac n2\Bigr)!,
\]
using $a-k-b=k$.

\noindent \emph{Source construction:} ($k=\tfrac{a+b}{2}$, $|T|=a-k=m=\tfrac{a-b}{2}$, and
$\tfrac n2=k$). Now every spectator excites the signal path it takes, so
$\phi(B)\subseteq S$. The remaining $k-b$ elements of $S$ pair bijectively with
the $m=k-b$ elements of $T$, again with the emitting copy forced. Hence
\[
  \mu(S)=\frac{k!}{(k-b)!}\cdot (k-b)! = k! = \Bigl(\tfrac n2\Bigr)!.
\]
In both cases the count is independent of $S$ and nonzero, so every weight-$k$
pattern on $A$ is realized with the same amplitude, and
$\ket{\psi}\propto\sum_{|S|=k}\ket{S}\otimes\ket{0^b}$.
\end{proof}

Formally, the latter condition is 
\begin{theorem}[Coincidence count]\label{thm:count}
The number of coincidences of $\texttt{G}(a,b,\sigma)$ is
\[
  M(a,b)\;=\;\frac{a!}{\bigl(\tfrac{a-b}{2}\bigr)!}\;=\;\binom{a}{k}\Bigl(\tfrac n2\Bigr)!,
\]
regardless of choice of $\sigma$.
\end{theorem}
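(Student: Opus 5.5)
There are two independent ways to establish Theorem~\ref{thm:count}, and I would give the direct count first and then check it against Theorem~\ref{thm:main}.

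\textbf{Direct count.} By Lemma~\ref{lem:canon}, every coincidence corresponds uniquely to a pair consisting of
\begin{enumerate}
\item[(a)] an injection $\phi:B\hookrightarrow A$, and
\item[(b)] a coincidence of the doubled complete core on the $2m=a-b$ signal paths outside $\phi(B)$.
\end{enumerate}
Each spectator is joined to each $i\in A$ by exactly one crystal, whether $\sigma$ is sink or source. The injection in (a) therefore determines the spectator crystals completely, and there are $a!/(a-b)!$ such injections.

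For (b), the doubled complete core on $2m$ vertices has $(2m-1)!!$ underlying pairings. Each of the $m$ pairs can be realized by either of its two parallel crystals, which gives $2^m(2m-1)!!=(2m)!/m!$ coincidences. This is the same computation already made for the gadget in Section~\ref{sec:gadget}.

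Multiplying the two factors gives
\[
\frac{a!}{(a-b)!}\cdot\frac{(a-b)!}{m!}=\frac{a!}{\bigl(\tfrac{a-b}{2}\bigr)!}.
\]
Neither factor depends on $\sigma$, so the count is independent of the choice of construction.

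\textbf{Cross-check via Theorem~\ref{thm:main}.} By Theorem~\ref{thm:weight}, every coincidence registers some weight-$k$ pattern $S\subseteq A$. Theorem~\ref{thm:main} says each of the $\binom{a}{k}$ patterns is registered exactly $(n/2)!$ times. Summing over patterns, the total number of coincidences is $\binom{a}{k}(n/2)!$.

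\textbf{Matching the two forms.} It remains to confirm that the two expressions agree, using $n/2=(a+b)/2$.
\begin{itemize}
\item In the sink case, $k=m$ and $a-k=n/2$. Then $\binom{a}{k}(n/2)!=a!/k!=a!/m!$.
\item In the source case, $k=n/2$ and $a-k=m$. Then $\binom{a}{k}(n/2)!=a!/(a-k)!=a!/m!$.
\end{itemize}
The two forms agree in both cases.

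I expect no real obstacle. The only points that need care are the identity $2^m(2m-1)!!=(2m)!/m!$ and keeping the case bookkeeping for $k$ straight.
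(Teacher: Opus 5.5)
Your proposal is correct and follows essentially the same route as the paper: the decomposition from Lemma~\ref{lem:canon} into $a!/(a-b)!$ injections times $2^m(2m-1)!!=(2m)!/m!$ core coincidences, with the second expression obtained from Theorem~\ref{thm:main} by summing over the $\binom{a}{k}$ patterns. Your explicit sink/source check that $\binom{a}{k}(n/2)!=a!/m!$ spells out the step the paper compresses into ``Theorem~\ref{thm:main} combined with completeness.''
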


\begin{proof}
Since Lemma~\ref{lem:canon} shows that each coincidence of $\texttt{G}(a,b,\sigma)$ arises uniquely as an injection from $B$ into $A$ and a doubled complete core on $a-b = 2m$ signal paths, we can multiply the number of injections and the number of doubled complete core signals to count $M(a,b)$. Note that an injection from $B$ into $A$ is an ordered $b$-tuple of distinct elements of $A$, which is counted as $\frac{a!}{(a-b)!}$. As there remain $2m=a-b$ signal paths on a double completed core gadget, there are $(2m-1)!!$ perfect matchings. Each of the $m$ pairs in the doubled complete core can be realized by two edges, hence a factor of $2^m$. Thus, $M(a,b)=\frac{a!}{(a-b)!}\cdot(2m-1)!!\cdot 2^{m}$. Using
$(2m-1)!!\,2^{m}=(2m)!/m!$ and $2m=a-b$ yields $M(a,b)=a!/m!$. The second expression
is Theorem~\ref{thm:main} combined with completeness.
\end{proof}

Note that each spectator--signal pair carries one
crystal in either the source or sink construction, so the two setups have the same set
of coincidences and differ only in which photons are excited. Changing the
emitted modes of the $ab$ spectator crystals moves the target state from
$\ket{D_{(a-b)/2}^a}$ to $\ket{D_{(a+b)/2}^a}$ at zero change in crystal count,
coincidence count, or detection rate.

\begin{corollary}[Design rule]\label{cor:design}
For any $a\ge 1$ and $0\le k\le a$ there is a setup of the above type producing
$\ket{D_k^a}$ on $a$ signal modes, using
$n=2\max(k,\,a-k)\ \text{ paths},\; b=|a-2k|\ \text{spectators},$ and $
  a(a-1)+ab\ \text{crystals}$
with $\sigma=\mathrm{sink}$ if $k< a/2$ and $\sigma=\mathrm{source}$ if
$k > a/2$. If $k = a/2$, we need no spectator modes. The amplitude of each term is $(n/2)!$ and the number of
contributing coincidences is $a!/\bigl(\min(k,a-k)\bigr)!$.
\end{corollary}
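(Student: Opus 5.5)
The plan is to instantiate $\texttt{G}(a,b,\sigma)$ with $b=|a-2k|$ and the indicated choice of $\sigma$, check that it is admissible under Definition~\ref{def:gadget}, and then read off each claimed quantity from Theorems~\ref{thm:weight}, \ref{thm:main} and \ref{thm:count}.

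\textbf{Admissibility.} Set $b=|a-2k|$. Since $a-2k\equiv a \pmod 2$, we get $b\equiv a\pmod 2$. Since $0\le k\le a$, we get $b\le a$.

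\textbf{Target weight.} I split into three cases.
\begin{itemize}
\item If $k<a/2$, take $\sigma=\text{sink}$. Then $b=a-2k$, and Definition~\ref{def:gadget} gives $(a-b)/2=k$.
\item If $k>a/2$, take $\sigma=\text{source}$. Then $b=2k-a$, and the definition gives $(a+b)/2=k$.
\item If $k=a/2$, then $b=0$. The definition forces $\sigma=\text{sink}$ and yields $k=a/2$, so no spectator modes are needed.
\end{itemize}
In every case the gadget's parameter $k$ is the desired $k$. Theorem~\ref{thm:main} then gives $\ket{\psi}\propto\ket{D_k^a}\otimes\ket{0}^{\otimes b}$, with each weight-$k$ pattern realized $\mu(S)=(n/2)!$ times. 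This is the claimed amplitude, taken as an unnormalized coincidence multiplicity.

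\textbf{Resource counts.} The number of paths is $n=a+b=a+|a-2k|$. When $k\le a/2$ this equals $2(a-k)$, and when $k\ge a/2$ it equals $2k$; either way $n=2\max(k,a-k)$. The crystal count $a(a-1)+ab$ is stated directly in Definition~\ref{def:gadget}.

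\textbf{Number of contributing coincidences.} Theorem~\ref{thm:count} gives $a!/m!$ with $m=(a-b)/2$. Substituting $b=|a-2k|$ gives $m=(a-|a-2k|)/2=\min(k,a-k)$. Every coincidence contributes, since by Theorem~\ref{thm:weight} all of them lie in the target sector.

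There is no real obstacle; the only care needed is the case bookkeeping in the identities $\max$ and $\min$ against $|a-2k|$, and the boundary cases $k=0$ and $k=a$. For $k=0$ we have $b=a$, the sink construction, and $m=0$: each signal path is taken by a spectator, giving $a!$ coincidences, which is consistent with $(n/2)!=a!$. For $k=a$ the source construction gives the analogous consistent counts.
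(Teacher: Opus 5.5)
Your proposal is correct and follows the same route as the paper. You instantiate $\texttt{G}(a,b,\sigma)$ with $b=|a-2k|$ and read off the weight, the multiplicity $(n/2)!$ and the coincidence count $a!/m!$ from Theorems~\ref{thm:weight}, \ref{thm:main} and~\ref{thm:count}, while adding accurate bookkeeping the paper leaves implicit: the check $b\le a$, the identity $m=(a-|a-2k|)/2=\min(k,a-k)$, and the boundary cases $k\in\{0,a\}$.
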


\begin{proof}
Solve $k=(a\mp b)/2$ for $b=|a-2k|\ge 0$; then $a\equiv b \pmod 2$ holds
automatically and $n=a+b=2\max(k,a-k)$. The result follows from applying Theorems~\ref{thm:weight},
\ref{thm:main} and~\ref{thm:count}.
\end{proof}

Notably, this construction is mode-optimal for $k \geq a/2$.
\begin{corollary}[Optimality above half filling]\label{cor:opt}
If $k\ge a/2$ then $n=2k$, which meets the lower bound of
Proposition~\ref{prop:ceiling}(ii). No single-excitation setup
produces excitation number $k$ on fewer than $2k$ paths, so the source
configuration is mode-optimal in that regime. For $k<a/2$ the construction uses
$n=2(a-k)$ paths, which exceeds the bound $n\ge 2k$.
\end{corollary}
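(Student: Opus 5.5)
The statement to prove is Corollary~\ref{cor:opt}. The plan is to reduce it to two ingredients already available: the path count of Corollary~\ref{cor:design} and the ceiling of Proposition~\ref{prop:ceiling}(ii). First, for $k\ge a/2$ we have $\max(k,a-k)=k$, so Corollary~\ref{cor:design} (with $\sigma=\text{source}$ and $b=2k-a$, or $b=0$ when $k=a/2$) gives
\[
n=a+b=a+(2k-a)=2k .
\]
It is worth checking this directly against Definition~\ref{def:gadget}: there $k=(a+b)/2=n/2$ for the source construction, and $k=m=a/2=n/2$ when $b=0$.

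Second, I will establish the general lower bound $n\ge 2k$ for any single-excitation setup $(G,c)$ whose post-selected state has excitation number $k$, meaning some (indeed every) coincidence $M$ has $H(\sig(M))=k$. Take such an $M$. Proposition~\ref{prop:ceiling}(ii) gives $k=H(\sig(M))\le n/2$, hence $n\ge 2k$. The source construction attains this bound with equality, so no setup of this kind can use fewer paths, which is exactly the mode-optimality claim.

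Third, for $k<a/2$ we have $\max(k,a-k)=a-k>k$. Corollary~\ref{cor:design} then gives $n=2(a-k)>2k$, so the sink construction strictly exceeds the bound.

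There is no real obstacle here; the only care needed is interpretive. I must say precisely what ``produces excitation number $k$'' means, namely that at least one coincidence registers weight $k$, so that Proposition~\ref{prop:ceiling} applies: it needs a single-excitation graph admitting a coincidence. I also need to note that optimality is claimed only among single-excitation setups, i.e.\ those with $\rho(e)\le 1$, because allowing $\rho(e)=2$ crystals would break the ceiling.
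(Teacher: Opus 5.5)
Your proposal is correct and takes the same route as the paper. The paper gives no separate proof: the corollary's own wording carries the argument, which is to read off $n=2\max(k,a-k)$ from Corollary~\ref{cor:design} and compare it with $H(\sig(M))\le n/2$ from Proposition~\ref{prop:ceiling}(ii). Your interpretive caveats (some coincidence must register weight $k$, and the comparison class is $\rho(e)\le 1$) are the right ones, and the construction lies in that class since all of its crystals have $\rho(e)\in\{0,1\}$.
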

In particular, the double core gadget construction may not be mode-optimal when $k < a/2$. This is especially obvious in the W-state case, as the construction in \cite{gajjala2026w, GuGraphsIII2019} creates W states using $a$ paths with no spectators.

We conclude with an example and observation that vertex-transitivity is not sufficient to generate Dicke states. To see this, consider the double complete bipartite graph $K_{3,3}$ which consists of two sets of three vertices, with doubled edges, representing crystals $\ket{1,0}$ and $\ket{0,1}$, occurring only between two vertices not in the same sets, as in Fig.~\ref{fig:k33-full}.  This experiment graph is
vertex-transitive and edge-transitive, is excitation-balanced, and every
coincidence carries $3$ excitations. All $20$ weight-$3$ patterns occur, however the bitstrings $111000$ and $000111$ occur $3!=6$ times, where the first three bits refer to vertices in one partition and the last three bits refer to vertices in the other partition. All other bitstrings occur only two times when enumerating over all perfect matchings of the experiment graph.  The resulting state is a superposition of weight-$3$
basis states but is not $\ket{D_3^6}$.
\begin{figure}
  \centering
  \begin{tikzpicture}
    \placevertices{6}{3.4}{}

    \foreach \i in {1,2,3}{
      \foreach \j in {4,5,6}{
        \halfedge{v\i}{v\j}{16}   
        \halfedge{v\j}{v\i}{16}   
      }
    }

    \foreach \i in {1,...,6}{\node[vtx] at (v\i) {\i};}
  \end{tikzpicture}
  \caption{The doubled complete bipartite experiment graph on
    $X=\{1,2,3\}$ and $Y=\{4,5,6\}$. Every pair $\{i,j\}$ with $i\in X$,
    $j\in Y$ carries two parallel crystals, one emitting
    $\lvert 1_i,0_j\rangle$ and one emitting $\lvert 0_i,1_j\rangle$,
    for $18$ crystals in total. Every crystal is excitation-balanced and every
    coincidence uses three crystals, so every coincidence registers
    Hamming weight $3$.}
  \label{fig:k33-full}
\end{figure}

\section{Conclusion}\label{sec:discussion}
In this work, for every $a$ and $k$, we show how to construct a quantum state of the form $\ket{D_k^a} \otimes \ket{0}^{\otimes b}$, where $b = |a-2k|$. When $a$ is odd, we note that one needs a spectator mode to construct $\ket{D_k^a}$, as perfect matchings exist only on graphs that have an even number of vertices. Additionally, we prove that when $k \geq \frac{a}{2}$, the construction is mode-optimal. It is worth noting that the W states are the case when $k=1$ and that the construction in this work is not optimal for generating W states. Furthermore, we give an example where a vertex- and edge-transitive graph does not generate a Dicke state, highlighting that more rigid restrictions are required to generate Dicke states when using a doubled complete core gadget. Throughout this work, all graphs have an implicit edge weight of $1$ and all amplitudes are positive integers, so this family of graphs lies in the constructive-interference regime.

We note that this graph can generate $\ket{D_k^a} \otimes \ket{0}^{\otimes b}$ with  $a(a-1)+ab$ crystals, so modest $\ket{D_k^a}$ should be implementable on hardware \cite{bao2023very}. Future research includes developing graphs that can generate Dicke states with fewer edges. Furthermore, while this work generates Dicke states such that each basis state has the same amplitude, we showed that one can create asymmetric Dicke states, i.e. states that have the same basis states as Dicke states but with unequal amplitudes in the example at the end of the previous section. A mathematically rigorous characterization of the types of states that can be generated by applying graph operations, such as adding and deleting edges, to the source and sink gadgets would yield a larger collection of quantum states that can be generated using graphs. Finally, as mentioned in \cite{gajjala2026w}, structural characterizations of graphs that can generate Dicke states is a natural open question.

\section*{AI use}
The author used Claude Opus 5 to verify the FORALL Dicke state condition for sample graphs, write preliminary drafts of the mathematical statements, and to generate Tikz code. The author takes sole responsibility for correctness of the final manuscript.

\section*{Competing interests}
All authors declare no financial or non-financial competing interests.

\bibliographystyle{ieeetr}
\bibliography{main}
\end{document}